\title{On the Link Between
Strongly Connected Iteration Graphs and   
  Chaotic Boolean Discrete-Time Dynamical Systems}
\author{J. M. Bahi\inst{1}  \and J.-F. Couchot\inst{1}  \and C. Guyeux\inst{1}
  \and A. Richard\inst{2} 
}
\institute{
Computer Science Laboratory (LIFC), University of Franche-Comt\'{e}, France
\email{\{jacques.bahi,jean-francois.couchot,christophe.guyeux\}@univ-fcomte.fr}
\and
I3S - UMR CNRS 6070  Sophia Antipolis - France\\
\email{richard@i3s.unice.fr}}
\newcommand{\vectornorm}[1]{\ensuremath{\left|\left|#1\right|\right|_2}}
\newcommand{\Nats}[0]{\ensuremath{\mathbb{N}}}
\newcommand{\R}[0]{\ensuremath{\mathbb{R}}}
\newcommand{\Bool}[0]{\ensuremath{\mathds{B}}}
\newcommand{\AR}[1]{\begin{color}{red}\end{color}}
\newcommand{\JFC}[1]{\begin{color}{green}\textit{}\end{color}}
\newcommand{\CG}[1]{\begin{color}{blue}\textit{}\end{color}}
\begin{document}

\maketitle

\begin{abstract}
Chaotic  functions  are  characterized  by  sensitivity  to  initial  conditions,
transitivity, and regularity.  Providing new  functions with such properties is a
real challenge.  This work shows that one can associate with any Boolean network
a continuous function, whose discrete-time iterations are chaotic if and only if
the  iteration  graph of  the  Boolean  network  is strongly  connected.   Then,
sufficient  conditions  for  this  strong  connectivity  are  expressed  on  the
interaction graph of  this network, leading to a  constructive method of chaotic
function  computation.   The whole  approach  is  evaluated  in the  chaos-based
pseudo-random number generation context.

\end{abstract}

\begin{keywords}
Boolean network,
discrete-time dynamical system,
topological chaos
\end{keywords}

\section{Introduction}
Chaos  has attracted  a  lot of  attention  in various  domains  of science  and
engineering,        \textit{e.g.},        hash       function~\cite{yi2005hash},
steganography~\cite{Dawei2004},           pseudo          random          number
generation~\cite{stojanovski2001chaos}.
All  these  applications  capitalize   fundamental  properties  of  the  chaotic
maps~\cite{Devaney},  namely:  sensitive   dependence  on   initial  conditions,
transitivity, and density of periodic points.
A system  is sensitive  to initial conditions  if any point contains,
in any neighborhood, another point  with a completely different future trajectory.
Topological  transitivity  is  established  when,  for  any  element,  any
neighborhood of its future evolution eventually 
 overlaps  with any other open
set.   On the  contrary,  a  dense set  of  periodic points  is  an element  of
regularity that a chaotic dynamical system has to exhibit.

Chaotic  discrete-time dynamical  systems  are iterative  processes
  defined by  a
chaotic map $f$  from a domain $E$ to itself.   Starting from any configurations
$x \in E$, the system produces the sequence $x,f(x),f^2(x), f^3(x),\dots$, where
$f^k(x)$  is the  $k$-th iterate  of  $f$ at  $x$.  Works  referenced above  are
instances of that scheme: they iterate \emph{tent} or \emph{logistic} maps known
to be chaotic on $\R$.

As far as we  know, no result so far states that
the chaotic properties  of a function that
has  been  theoretically proven  on  $\R$  remain  valid on  the  floating-point
numbers, which is the implementation domain.   Thus, to avoid the 
loss of chaos this
work presents an alternative: to construct, from Boolean networks $f : \Bool^n \rightarrow \Bool^n$, continuous fonctions $G_f$ defined on the domain   $\llbracket 1 ;  n \rrbracket^{\Nats}  \times \Bool^n$,
where $\llbracket 1  ; n \rrbracket$ is the interval of integers
$\{1, 2, \hdots,  n\}$ and $\Bool$ is the Boolean domain  $\{0,1\}$.  Due to the
discrete nature of $f$, theoretical results obtained on $G_f$ are preserved in
implementation

Furthermore,  instead  of  finding  an example of such maps  and  to  prove  the
chaoticity  of  their  discrete-time   iterations,  we  tackle  the  problem  of
characterizing all the maps with chaotic iterations according to Devaney's chaos
definition~\cite{Devaney}.    This   is    the    first   contribution.     This
characterization is  expressed on  the asynchronous iteration graph  of the Boolean  map $f$,
which   contains  $2^n$  vertices.    To  extend   the  applicability   of  this
characterization, sufficient conditions  that ensure this  chaoticity are expressed  on the
interaction graph of $f$, which only  contains $n$ vertices.  This is the second
contribution.  Starting thus with  an interaction graph with required properties,
all the  maps resulting from a Boolean network
constructed on this graph have chaotic iterations.
Eventually, the approach is applied on a pseudo random number generation (PRNG).
Uniform distribution of the output, which  is a necessary condition for PRNGs is
then addressed. Functions with such property are thus characterized again on the
asynchronous iteration graph.  This is the third contribution.  The relevance of the approach
and the application to pseudo random number generation are evaluated thanks to a
classical test suite.

The  rest of  the paper  is organized  as  follows.  Section~\ref{section:chaos}
recalls   discrete-time  Boolean   dynamical  systems.    Their   chaoticity  is
characterized  in  Sect.~\ref{sec:charac}.    Sufficient  conditions  to  obtain
chaoticity  are presented  in  Sect.~\ref{sec:sccg}. The  application to  pseudo
random  number   generation  is  formalized,   maps  with  uniform   output  are
characterized, and PRNGs are  evaluated in Sect.~\ref{sec:prng}.  The paper ends
with  a conclusion  section where intended
future work is presented.

\section{Preliminaries}
\label{section:chaos}

Let $n$ be a positive integer. A Boolean network is a discrete dynamical
system defined from a {\emph{Boolean map}}
\[
f:\Bool^n\to\Bool^n,\qquad x=(x_1,\dots,x_n)\mapsto f(x)=(f_1(x),\dots,f_n(x)),
\]
and an {\emph{iteration scheme}} (\textit{e.g.}, parallel, sequential,
asynchronous\ldots). 
For instance, with the parallel iteration scheme, 
given an initial configuration $x^0\in\Bool^n$, the dynamics
of the system are described by the recurrence $x^{t+1}=f(x^t)$.
The retained scheme only modifies one element at 
each iteration and is further referred by \emph{asynchronous}. In other words, at the $t^{th}$ iteration, only the $s_{t}-$th component is
``iterated'', where $s = \left(s_t\right)_{t \in \mathds{N}}$ is a sequence of indices taken in $\llbracket 1;n \rrbracket$ called ``strategy''. Formally,
let $F_f: \llbracket1;n\rrbracket\times \Bool^{n}$ to $\Bool^n$ be defined by
\[
F_f(i,x)=(x_1,\dots,x_{i-1},f_i(x),x_{i+1},\dots,x_n).
\]
With the asynchronous iteration  scheme, given an initial configuration
$x^0\in\Bool^n$ and a strategy $s\in
\llbracket1;n\rrbracket^\Nats$, the dynamics of the network
are described by the recurrence
\begin{equation}\label{eq:asyn}
x^{t+1}=F_f(s_t,x^t).
\end{equation}
Let $G_f$ be the map from $\llbracket1;n\rrbracket^\Nats\times\Bool^n$ to itself defined by
\[
G_f(s,x)=(\sigma(s),F_f(s_0,x)),
\] 
where $\forall t\in\Nats,\sigma(s)_t=s_{t+1}$. 
The parallel iteration of $G_f$ from an initial point
$X^0=(s,x^0)$ describes the ``same dynamics'' as the asynchronous
iteration of $f$ induced by $x^0$ and the strategy
$s$ (this is why $G_f$ has been introduced). 

Consider the space $\mathcal{X}=\llbracket 1;n\rrbracket^{\Nats}\times
\Bool^n$. The distance $d$ between two points $X=(s,x)$ and
$X'=(s',x')$ in $\mathcal{X}$ is defined by
\[
d(X,X')= d_H(x,x')+d_S(s,s'),~\textrm{where}~
\left\{
\begin{array}{l}
\displaystyle{d_H(x,x')=\sum_{i=1}^n |x_i-x'_i|}\\[5mm] 
\displaystyle{d_S(s,s')=\frac{9}{n}\sum_{t\in\Nats}\frac{|s_t-s'_t|}{10^{t+1}}}.
\end{array}
\right.
\]
Thus, $\lfloor d(X,X')\rfloor=d_H(x,x')$ is the Hamming distance
between $x$ and $x'$, and $d(X,X')-\lfloor d(X,X')\rfloor=d_S(s,s')$ 
measures the differences between $s$ and $s'$. More
precisely, this floating part is less than $10^{-k}$ if and only if
the first $k$ terms of the two strategies are equal. Moreover, if the
$k^{th}$ digit is nonzero, then $s_k\neq s'_k$. 

Let $f$ be any map from $\Bool^n$ to itself, and $\neg:\Bool^n \rightarrow \Bool^n$ defined by
$\neg(x)=(\overline{x_1},\dots,\overline{x_n})$. Considering this
distance $d$ on $\mathcal{X}$, it has already been proven that
\cite{guyeux10}:
\begin{itemize}
\item 
$G_f$ is \emph{continuous},
\item 
the parallel iteration of $G_\neg$ is \emph{regular} (periodic points of $G_{\neg}$ are dense in $\mathcal{X}$),
\item 
$G_{\neg}$ is \emph{topologically transitive} (for all $X,Y \in
\mathcal{X}$, and for all open balls $B_X$ and $B_Y$ centered in $X$ and
$Y$ respectively, there exist  $X'\in B_X$ and $t \in
\mathds{N}$ such that $G_{\neg}^t(X') \in B_Y$),
\item $G_{\neg}$ has 
\emph{sensitive dependence on initial conditions}
(there exists $\delta >0$ such that for any $X\in \mathcal{X}$
and any open ball $B_X$, there exist $X'\in B_X$ and $t\in\Nats$
 such that $d(G_{\neg}^t(X), G_{\neg}^t(X'))>\delta $).
\end{itemize}
Particularly, $G_{\neg}$ is {\emph{chaotic}}, according to the Devaney's
definition recalled below:

\begin{definition}[Devaney~\cite{Devaney}]
A continuous map $f$ on a metric space $(\mathcal{X},d)$ is chaotic 
if it is regular, sensitive, and topologically transitive.
\end{definition}

In other words, quoting Devaney in~\cite{Devaney}, a chaotic dynamical system ``is unpredictable because of the sensitive dependence on initial conditions. It cannot be broken down or simplified into two subsystems which do not interact because of topological transitivity. And in the midst of this random behavior, we nevertheless have an element of regularity''. Let us finally remark that the definition above is redundant: Banks \emph{et al.} have proven that sensitivity is indeed implied by regularity and transitivity \cite{Banks92}.

\section{Characterization of Chaotic Discrete-Time Dynamical Systems}\label{sec:charac}
In this section, we give a characterization of Boolean networks $f$ 
making the iterations of any induced map $G_f$ chaotic.
This is achieved by establishing inclusion relations 
between the transitive, regular, and chaotic sets defined below:
\begin{itemize}
\item   $\mathcal{T}   =    \left\{f   :   \mathds{B}^n   \to
\mathds{B}^n \big/ G_f \textrm{ is transitive} \right\}$,
\item   $\mathcal{R}   =    \left\{f   :   \mathds{B}^n   \to
\mathds{B}^n \big/ G_f \textrm{ is regular} \right\}$,
\item   $\mathcal{C}   =    \left\{f   :   \mathds{B}^n   \to
\mathds{B}^n  \big/  G_f  \textrm{  is chaotic  (Devaney)} \right\}$.
\end{itemize}

Let $f$ be a map from $\Bool^n$ to itself. The
{\emph{asynchronous iteration graph}} associated with $f$ is the
directed graph $\Gamma(f)$ defined by: the set of vertices is
$\Bool^n$; for all $x\in\Bool^n$ and $i\in \llbracket1;n\rrbracket$,
the graph $\Gamma(f)$ contains an arc from $x$ to $F_f(i,x)$. 
The relation between $\Gamma(f)$ and $G_f$ is clear: there exists a
path from $x$ to $x'$ in $\Gamma(f)$ if and only if there exists a
strategy $s$ such that the parallel iteration of $G_f$ from the
initial point $(s,x)$ reaches the point $x'$.
Finally, in what follows the term \emph{iteration graph} is a shortcut for 
asynchronous iteration graph.

We can thus characterize $\mathcal{T}$:

\begin{proposition} $G_f$  is transitive if  and only if
 $\Gamma(f)$ is strongly connected.
\end{proposition}

\begin{proof} 

$\Longleftarrow$ Suppose that $\Gamma(f)$ is strongly connected. Let
$(s,x)$ and $(s',x')$ be two points of $\mathcal{X}$, and let
$\varepsilon >0$. We will define a strategy $\tilde s$ such that the
distance between $(\tilde s,x)$ and $(s,x)$ is less than
$\varepsilon$, and such that the parallel iterations of $G_f$ from
$(\tilde s,x)$ reaches the point $(s',x')$.

Let $t_1 =\lfloor-\log_{10}(\varepsilon)\rfloor$, and let $x''$ be the
configuration of $\Bool^n$ that we obtain from $(s,x)$ after $t_1$ iterations
of $G_f$. Since $\Gamma(f)$ is strongly connected, there exists a
strategy $s''$ and $t_2\in\Nats$ such that, $x'$ is reached from
$(s'',x'')$ after $t_2$ iterations~of~$G_f$.

Now, consider the strategy $\tilde
s=(s_0,\dots,s_{t_1-1},s''_0,\dots,s''_{t_2-1},s'_0,s'_1,s'_2,s'_3\dots)$.
It is clear that $(s',x')$ is reached from $(\tilde s,x)$ after
$t_1+t_2$ iterations of $G_f$, and since $\tilde s_t=s_t$ for $t<t_1$,
by the choice of $t_1$, we have $d((s,x),(\tilde
s,x))<\epsilon$. Consequently, $G_f$ is transitive.

$\Longrightarrow$ If $\Gamma(f)$ is not strongly connected, then 
there exist two configurations $x$ and $x'$ such that $\Gamma(f)$ has no path
from $x$ to $x'$. Let $s$ and $s'$ be two strategies, and let
$0<\varepsilon<1$. Then, for all $(s'',x'')$ such that
$d((s'',x''),(s,x))<\varepsilon$, we have $x''=x$, so that iteration
of $G_f$ from $(s'',x'')$ only reaches points in $\mathcal{X}$ that are at
a greater distance than one with $(s',x')$. So $G_f$ is not transitive.
\end{proof}




We now prove that:

\begin{proposition}
\label{Prop: T est dans R} $\mathcal{T} \subset \mathcal{R}$.
\end{proposition}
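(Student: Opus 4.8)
The plan is to leverage the characterization just established: since $f \in \mathcal{T}$ means $G_f$ is transitive, the previous proposition tells us that $\Gamma(f)$ is strongly connected, and I will use this to build periodic points of $G_f$ arbitrarily close to any prescribed point. The starting observation is what it means for $(\tilde s, x)$ to be a periodic point of $G_f$ of period $p$: since $G_f^p(\tilde s,x)=(\sigma^p(\tilde s),\,\cdot\,)$, two conditions must hold simultaneously, namely that the strategy $\tilde s$ is itself $p$-periodic so that $\sigma^p(\tilde s)=\tilde s$, and that the configuration returns to $x$ after applying the length-$p$ block $\tilde s_0,\dots,\tilde s_{p-1}$. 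Thus the whole problem reduces to constructing a closed walk through $x$ in $\Gamma(f)$ whose index prefix agrees with $s$ for long enough.

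Fix an arbitrary point $(s,x)\in\mathcal{X}$ and $\varepsilon>0$. First I would choose $t_1$ large enough that any strategy coinciding with $s$ on its first $t_1$ terms lies within $d_S$-distance $\varepsilon$ of $s$ (concretely $t_1=\lfloor -\log_{10}\varepsilon\rfloor$, as in the transitivity proof). Let $x''$ be the configuration reached from $x$ after applying $s_0,\dots,s_{t_1-1}$, that is, the $\Bool^n$-component of $G_f^{t_1}(s,x)$. By strong connectivity there is a path in $\Gamma(f)$ from $x''$ back to $x$; write $w_0,\dots,w_{t_2-1}$ for the corresponding sequence of indices, so that iterating $F_f$ with these indices carries $x''$ to $x$ in $t_2$ steps. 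Concatenating, the block $a=(s_0,\dots,s_{t_1-1},w_0,\dots,w_{t_2-1})$ of length $p=t_1+t_2$ describes a closed walk from $x$ to $x$ in $\Gamma(f)$.

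I would then define $\tilde s$ to be the infinite verbatim repetition of the block $a$. By construction the configuration returns to $x$ after exactly $p$ steps and $\sigma^p(\tilde s)=\tilde s$, so $G_f^p(\tilde s,x)=(\tilde s,x)$ and $(\tilde s,x)$ is a genuine periodic point. Since $\tilde s_t=s_t$ for every $t<t_1$ and both points carry the same configuration $x$, we get $d((\tilde s,x),(s,x))=d_S(\tilde s,s)<\varepsilon$ by the choice of $t_1$. As $(s,x)$ and $\varepsilon$ were arbitrary, periodic points of $G_f$ are dense in $\mathcal{X}$, hence $G_f$ is regular and $f\in\mathcal{R}$.

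The point demanding care — and the one I expect to be the main obstacle — is not the distance estimate, which is routine, but ensuring that the pair $(\tilde s,x)$, and not merely its configuration, is truly fixed by $G_f^p$: this forces $\tilde s$ to be exactly periodic, which is precisely why I build it as the repetition of a single closed-walk block rather than appending a convergent tail as in the transitivity argument. A minor degenerate case to verify is $x''=x$ (in particular when $t_1=0$), where one still needs a nontrivial closed walk through $x$; strong connectivity together with the fact that every vertex of $\Gamma(f)$ has an outgoing arc (to each $F_f(i,x)$) guarantees that $x$ lies on a cycle, so a suitable block $a$ always exists.
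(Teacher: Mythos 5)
Your proof is correct and follows essentially the same route as the paper: choose $t_1=\lfloor-\log_{10}\varepsilon\rfloor$, use strong connectivity (via the preceding proposition) to close the walk from the configuration reached after $t_1$ steps back to $x$, and take $\tilde s$ to be the infinite repetition of the resulting block, which yields a periodic point within $\varepsilon$ of $(s,x)$. Your extra remarks on why both the strategy and the configuration must return, and on the degenerate case $x''=x$, are sound refinements of the same argument rather than a different approach.
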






\begin{proof}  
Let $f:\Bool^n\to\Bool^n$ such that $G_f$ is transitive ($f$ is in
$\mathcal{T}$). Let $(s,x) \in\mathcal{X}$ and $\varepsilon >0$. To
prove that $f$ is in $\mathcal{R}$, it is sufficient to prove that
there exists a strategy $\tilde s$ such that the distance between
$(\tilde s,x)$ and $(s,x)$ is less than $\varepsilon$, and such that
$(\tilde s,x)$ is a periodic point.

Let $t_1=\lfloor-\log_{10}(\varepsilon)\rfloor$, and let $x'$ be the
configuration that we obtain from $(s,x)$ after $t_1$ iterations of
$G_f$. According to the previous proposition, $\Gamma(f)$ is strongly
connected. Thus, there exists a strategy $s'$ and $t_2\in\Nats$ such
that $x$ is reached from $(s',x')$ after $t_2$ iterations of $G_f$.

Consider the strategy $\tilde s$ that alternates the first $t_1$ terms
of $s$ and the first $t_2$ terms of $s'$: $\tilde
s=(s_0,\dots,s_{t_1-1},s'_0,\dots,s'_{t_2-1},s_0,\dots,s_{t_1-1},s'_0,\dots,s'_{t_2-1},s_0,\dots)$. It
is clear that $(\tilde s,x)$ is obtained from $(\tilde s,x)$ after
$t_1+t_2$ iterations of $G_f$. So $(\tilde s,x)$ is a periodic
point. Since $\tilde s_t=s_t$ for $t<t_1$, by the choice of $t_1$, we
have $d((s,x),(\tilde s,x))<\epsilon$.
\end{proof}

\begin{remark} 
Inclusion of proposition \ref{Prop: T est dans
R}  is strict,  due to  the identity  map (which  is regular,  but not
transitive).
\end{remark}

We can thus conclude  that $\mathcal{C} = \mathcal{R} \cap \mathcal{T}
= \mathcal{T}$, which leads to the following characterization:

\begin{theorem}
\label{Th:Caractérisation   des   IC   chaotiques}  
Let $f:\Bool^n\to\Bool^n$. $G_f$ is chaotic  (according to  Devaney) 
if and only if $\Gamma(f)$ is strongly connected.
\end{theorem}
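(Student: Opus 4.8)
The plan is to derive the theorem from the two propositions already established, so the work is almost entirely bookkeeping about the three sets $\mathcal{T}$, $\mathcal{R}$, $\mathcal{C}$. First I would recall the definitions: $\mathcal{C}$ is the set of $f$ for which $G_f$ is chaotic in Devaney's sense, i.e. regular, transitive, and sensitive. Since $G_f$ is continuous for every $f$ (this is quoted from \cite{guyeux10} in the preliminaries), continuity is automatic and need not be tracked. The first proposition gives the equivalence $G_f$ transitive $\iff \Gamma(f)$ strongly connected, and Proposition~\ref{Prop: T est dans R} gives $\mathcal{T}\subset\mathcal{R}$; these are the only nontrivial inputs.

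Next I would assemble the set identity $\mathcal{C}=\mathcal{T}$. By definition a chaotic map is in particular transitive and regular, so $\mathcal{C}\subset\mathcal{R}\cap\mathcal{T}$. For the reverse inclusion I would invoke the theorem of Banks \emph{et al.} \cite{Banks92}, already cited in the preliminaries, which states that for a continuous map regularity together with transitivity forces sensitivity; hence any $f\in\mathcal{R}\cap\mathcal{T}$ has $G_f$ regular, transitive, and (by Banks) sensitive, so $f\in\mathcal{C}$. This yields $\mathcal{C}=\mathcal{R}\cap\mathcal{T}$. Then Proposition~\ref{Prop: T est dans R} collapses the intersection: since $\mathcal{T}\subset\mathcal{R}$, we have $\mathcal{R}\cap\mathcal{T}=\mathcal{T}$, and therefore $\mathcal{C}=\mathcal{T}$.

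Finally I would conclude by chaining with the first proposition: $f\in\mathcal{C}\iff f\in\mathcal{T}\iff G_f$ is transitive $\iff\Gamma(f)$ is strongly connected, which is exactly the statement. In short, the logical skeleton is
\[
G_f \text{ chaotic}\iff f\in\mathcal{C}=\mathcal{T}\iff \Gamma(f)\text{ strongly connected}.
\]

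I do not expect a genuine obstacle here, since all the analytic content has been discharged by the earlier propositions and by the cited results on continuity and on the Banks \emph{et al.} implication. The only point requiring a little care is to make explicit that Devaney chaos needs \emph{three} conditions while we verify only two directly: one must not simply assert $\mathcal{C}=\mathcal{R}\cap\mathcal{T}$ by definition, but justify the sensitivity component via \cite{Banks92}. Getting that citation right, and noting the standing continuity of $G_f$, is the whole subtlety; the rest is the elementary set-theoretic reduction $\mathcal{R}\cap\mathcal{T}=\mathcal{T}$ furnished by the strict inclusion of Proposition~\ref{Prop: T est dans R}.
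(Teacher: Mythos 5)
Your proposal is correct and follows exactly the paper's own route: the paper states the theorem as an immediate consequence of the chain $\mathcal{C}=\mathcal{R}\cap\mathcal{T}=\mathcal{T}$, using the first proposition, Proposition~\ref{Prop: T est dans R}, and the Banks \emph{et al.} implication recalled in the preliminaries. You merely make explicit the appeal to \cite{Banks92} for the sensitivity component, which the paper leaves implicit.
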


\section{Generating Strongly Connected Iteration Graph}\label{sec:sccg}
The previous  section  has  shown  the  interest  of  strongly  connected  iteration
graphs.  This section  presents two  approaches to  generate functions with such
property. The  first is  algorithmic (Sect.~\ref{sub:sccg2}) whereas  the second
gives   a   sufficient   condition   on the interaction graph of 
the Boolean map $f$   to   get  a strongly connected iteration graph
(Sect.~\ref{sub:sccg1}).
  
\subsection{Algorithmic       Generation       of       Strongly       Connected
Graphs}\label{sub:sccg2} 

This section presents a first solution to compute a map $f$ with a
strongly connected graph of iterations $\Gamma(f)$.  It is based on a
generate and test approach.

We first consider the negation function $\neg$ whose iteration graph
$\Gamma(\neg)$ is obviously strongly connected. 
Given a graph $\Gamma$, initialized with
$\Gamma(\neg)$, the algorithm iteratively does the two following
stages:
\begin{enumerate}
\item randomly select an edge of the current iteration graph $\Gamma$ and
\item  check whether  the  current  iteration graph  without  that edge  remains
strongly connected (by a  Tarjan algorithm~\cite{Tarjanscc72}, for instance). In
the positive case the edge is removed from $\Gamma$,
\end{enumerate} until  a rate $r$ of  removed edges is greater  than a threshold
given by the user.
If $r$  is close to $0\%$ (\textit{i.e.}, few  edges are removed), there
should remain about $n\times 2^n$ edges.   In the opposite case, if $r$ is close
to $100\%$, there are about $2^n$ edges left.  
In all cases, this step returns
the last graph  $\Gamma$ that is strongly connected.  It is  now then obvious to
return the function $f$ whose iteration graph is $\Gamma$.

Even if this algorithm always returns functions with stroncly connected component (SCC) iteration graph, 
it suffers from 
iteratively verifying connexity on 
the whole iteration graph, \textit{i.e.},
on a graph with $2^n$ vertices.   
Next section tackles this problem: it presents sufficient conditions
on a graph reduced to $n$ elements     
that allow to obtain SCC iteration graph.

\subsection{Sufficient Conditions to Strongly Connected Graph}\label{sub:sccg1}

We are looking for maps $f$ such that interactions between $x_i$ and
$f_j$ make its iteration graph $\Gamma(f)$ strongly connected.
We first need additional notations and
definitions. For $x\in\Bool^n$ and $i\in\llbracket 1;n\rrbracket$, we
denote by $\overline{x}^i$ the configuration that we obtain be switching the
$i-$th component of $x$, that is,
$\overline{x}^i=(x_1,\dots,\overline{x_i},\dots,x_n)$. Information
interactions between the components of the
system are obtained from the {\emph{discrete Jacobian matrix}} $f'$
of $f$, which is defined as being the map  which associates to each configuration
$x\in\Bool^n$, the $n\times n$ matrix
\[
f'(x)=(f_{ij}(x)),\qquad 
f_{ij}(x)=\frac{f_i(\overline{x}^j)-f_i(x)}{\overline{x}^j_j-x_j}\qquad (i,j\in\llbracket1;n\rrbracket).
\]

More precisely, interactions are represented under the form of a
signed directed graph $G(f)$ defined by: the set of vertices is
$\llbracket1;n\rrbracket$, and there exists an arc from $j$ to $i$ of
sign $s\in\{-1,1\}$, denoted $(j,s,i)$, if $f_{ij}(x)=s$ for at least
one $x\in\Bool^n$. Note that the presence of both a positive and a
negative arc from one vertex to another is allowed.

Let $P$ be a sequence of arcs of $G(f)$ of the form
\[
(i_1,s_1,i_2),(i_2,s_2,i_3),\ldots,(i_r,s_r,i_{r+1}).
\]
Then, $P$ is said to be a path of $G(f)$ of length $r$ and of sign
$\Pi_{i=1}^{r}s_i$, and $i_{r+1}$ is said to be reachable from
$i_1$. $P$ is a {\emph{circuit}} if $i_{r+1}=i_1$ and if the vertices
$i_1$,\ldots $i_r$ are pairwise distinct. A vertex $i$ of $G(f)$ has a
positive (resp. negative) {\emph{loop}}, if $G(f)$ has a positive
(resp. negative) arc from $i$ to itself.

Let $\alpha\in\Bool$. We denote by $f^{\alpha}$ the map from $\Bool^{n-1}$ 
to itself defined for
any $x\in\Bool^{n-1}$ by 
\[
f^{\alpha}(x)=(f_1(x,\alpha),\dots,f_{n-1}(x,\alpha)).
\]
We denote by $\Gamma(f)^\alpha$ the
subgraph of $\Gamma(f)$ induced by the subset
$\Bool^{n-1} \times \{\alpha\}$ of $\Bool^n$.
Let us give and prove the following technical lemma:

\begin{lemma}\label{lemma:subgraph}
$G(f^\alpha)$ is a subgraph of $G(f)$: every arc of $G(f^\alpha)$ is
an arc of $G(f)$. Furthermore, if $G(f)$ has no arc from $n$ to a
vertex $i\neq n$, then $G(f^\alpha)=G(f)\setminus n$: one obtains
$G(f^\alpha)$ from $G(f)$ by removing vertex $n$ as well as all the
arcs with $n$ as initial or final vertex.
\end{lemma}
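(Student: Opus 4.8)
The plan is to unwind the definition of the discrete Jacobian matrix on both sides and compare entries. Writing $x=(y,\alpha)\in\Bool^n$ for $y\in\Bool^{n-1}$, I observe that for an index $j\in\llbracket1;n-1\rrbracket$ the switched configuration $\overline{x}^j$ equals $(\overline{y}^j,\alpha)$, since flipping the $j$-th bit leaves the $n$-th component $\alpha$ untouched. Because $(f^\alpha)_i(y)=f_i(y,\alpha)$ for every $i\in\llbracket1;n-1\rrbracket$ by definition of $f^\alpha$, the numerator and denominator of the two Jacobian entries coincide, so that $(f^\alpha)_{ij}(y)=f_{ij}(y,\alpha)$ for all $i,j\in\llbracket1;n-1\rrbracket$. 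This identity is the computational heart of both assertions: if $(j,s,i)$ is an arc of $G(f^\alpha)$, then $i,j\in\llbracket1;n-1\rrbracket$ and $(f^\alpha)_{ij}(y)=s$ for some $y$, whence $f_{ij}(y,\alpha)=s$ and $(j,s,i)$ is an arc of $G(f)$, proving $G(f^\alpha)\subseteq G(f)$.

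For the second claim I would first translate the hypothesis. No arc from $n$ to a vertex $i\neq n$ means $f_{in}(x)\neq\pm1$ for all $x$; since $f_{in}(x)\in\{-1,0,1\}$ this forces $f_{in}(x)=0$ for every $x$, that is $f_i(\overline{x}^n)=f_i(x)$, which says precisely that $f_i$ does not depend on its $n$-th argument, for each $i\in\llbracket1;n-1\rrbracket$. The inclusion $G(f^\alpha)\subseteq G(f)\setminus n$ is then immediate from the first claim, since every arc of $G(f^\alpha)$ already joins two vertices of $\llbracket1;n-1\rrbracket$.

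The remaining inclusion is where the work lies. Take an arc $(j,s,i)$ of $G(f)\setminus n$, witnessed by some $x\in\Bool^n$ with $f_{ij}(x)=s$ and $i,j\leq n-1$. The obstacle is that this witness need not satisfy $x_n=\alpha$, so the identity of the first paragraph does not apply verbatim. This is exactly the point at which the hypothesis intervenes: since $f_i$ (for $i\leq n-1$) ignores its last coordinate and the switch on $j\leq n-1$ keeps the $n$-th coordinate fixed, replacing $x_n$ by $\alpha$ alters neither $f_i(x)$ nor $f_i(\overline{x}^j)$, so that $f_{ij}(x)=f_{ij}(y,\alpha)=(f^\alpha)_{ij}(y)=s$, where $y$ is the projection of $x$ onto its first $n-1$ coordinates. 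Hence $(j,s,i)$ is an arc of $G(f^\alpha)$, giving $G(f)\setminus n\subseteq G(f^\alpha)$ and therefore equality. The only genuine subtlety is this passage from an arbitrary witness to one whose $n$-th coordinate equals $\alpha$; everything else is direct substitution into the Jacobian formula.
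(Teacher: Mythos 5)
Your proposal is correct and follows essentially the same route as the paper: both rest on the identity $(f^\alpha)_{ij}(y)=f_{ij}(y,\alpha)$ for the first assertion, and both handle the second by observing that the hypothesis forces $f_i$ (for $i\neq n$) to be independent of the $n$-th coordinate, so a witness $x$ with arbitrary last bit can be replaced by one with last bit $\alpha$. Your version merely spells out more explicitly why ``no arc from $n$ to $i\neq n$'' means $f_{in}\equiv 0$, a step the paper compresses into ``in contradiction with the assumptions.''
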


\begin{proof}
Suppose that $G(f^{\alpha})$ has an arc from $j$ to $i$ of sign
$s$. By definition, there exists $x\in\Bool^{n-1}$ such that
$f^{\alpha}_{ij}(x)=s$, and since it is clear that
$f^{\alpha}_{ij}(x)=f_{ij}(x,\alpha)$, we deduce that $G(f)$ has an
arc from $j$ to $i$ of sign $s$. This proves the first assertion. 
To demonstrate the second assertion, it is sufficient to prove that if
$G(f)$ has an arc from $i$ to $j$ of sign $s$, with $i,j\neq n$, then
$G(f^\alpha)$ also contains this arc. So suppose that $G(f)$ has an
arc from $i$ to $j$ of sign $s$, with $i,j\neq n$. Then, there exists
$x\in\Bool^{n-1}$ and $\beta\in\Bool$ such that
$f_{ij}(x,\beta)=s$. If $f_{ij}(x,\beta)\neq f_{ij}(x,\alpha)$, then
$f_i$ depends on the $n-$th component, in
contradiction with the assumptions. So $f_{ij}(x,\alpha)=s$. It is
then clear that $f^{\alpha}_{ij}(x)=s$, that is, $G(f^\alpha)$ has an
arc from $j$ to $i$ of sign $s$.
\end{proof}

\begin{lemma}\label{lemma:iso}
$\Gamma(f^\alpha)$ and $\Gamma(f)^\alpha$ are isomorphic.
\end{lemma}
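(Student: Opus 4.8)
The plan is to exhibit an explicit bijection between the vertex sets and then check that it respects arcs in both directions. Recall that the vertices of $\Gamma(f^\alpha)$ are the elements of $\Bool^{n-1}$, while the vertices of $\Gamma(f)^\alpha$ are exactly the elements of $\Bool^{n-1}\times\{\alpha\}$, the subset of $\Bool^n$ on which the last coordinate is frozen to $\alpha$. The natural candidate is the map $\varphi:\Bool^{n-1}\to\Bool^{n-1}\times\{\alpha\}$ given by $\varphi(x)=(x,\alpha)$, which is clearly a bijection. So the real content is to show that $\varphi$ transports arcs to arcs and non-arcs to non-arcs.

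First I would fix the arc convention: in $\Gamma(f^\alpha)$ there is an arc from $x$ to $F_{f^\alpha}(i,x)$ for each $i\in\llbracket 1;n-1\rrbracket$, and in $\Gamma(f)^\alpha$ there is an arc from $(x,\alpha)$ to any $F_f(i,(x,\alpha))$ that again lands in $\Bool^{n-1}\times\{\alpha\}$. The key observation I would isolate is a coordinatewise computation: for $i\in\llbracket 1;n-1\rrbracket$ and $x\in\Bool^{n-1}$,
\[
F_f(i,(x,\alpha))=\bigl(x_1,\dots,x_{i-1},f_i(x,\alpha),x_{i+1},\dots,x_{n-1},\alpha\bigr)=\bigl(F_{f^\alpha}(i,x),\alpha\bigr),
\]
using that $f^\alpha_i(x)=f_i(x,\alpha)$ and that the $n$-th coordinate $\alpha$ is untouched because $i\neq n$. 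This identity immediately gives $\varphi\bigl(F_{f^\alpha}(i,x)\bigr)=F_f(i,(x,\alpha))$, so every arc of $\Gamma(f^\alpha)$ maps under $\varphi$ to an arc of $\Gamma(f)^\alpha$.

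For the converse direction I would argue that an arc of $\Gamma(f)^\alpha$ out of $(x,\alpha)$ can only come from iterating an index $i\in\llbracket 1;n-1\rrbracket$: iterating $i=n$ produces $F_f(n,(x,\alpha))=(x,f_n(x,\alpha))$, which remains inside the induced subgraph $\Gamma(f)^\alpha$ only when $f_n(x,\alpha)=\alpha$, and in that case it is a self-loop on $(x,\alpha)$; so up to these loops every arc of $\Gamma(f)^\alpha$ arises from some $i\neq n$ and, by the displayed identity, is the $\varphi$-image of the corresponding arc of $\Gamma(f^\alpha)$. The one subtlety to handle carefully, and the step I expect to require the most care, is this treatment of the $n$-th index: I must confirm that iterating $n$ never creates an arc of the induced subgraph that is missing from $\Gamma(f^\alpha)$, which reduces to noting that such an iteration either leaves the subset $\Bool^{n-1}\times\{\alpha\}$ (and so contributes no arc of $\Gamma(f)^\alpha$) or produces a self-loop that corresponds to the trivial fixed iteration in $\Gamma(f^\alpha)$. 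Once the arc correspondence is verified in both directions, $\varphi$ is a graph isomorphism and the lemma follows.
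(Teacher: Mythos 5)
Your bijection $\varphi(x)=(x,\alpha)$ is exactly the map $h$ used in the paper, and your forward computation $F_f(i,(x,\alpha))=(F_{f^\alpha}(i,x),\alpha)$ for $i\in\llbracket 1;n-1\rrbracket$ is precisely the content the paper dismisses as ``easy to see''. The gap is in your treatment of the index $i=n$. When $f_n(x,\alpha)=\alpha$, the arc from $(x,\alpha)$ to $F_f(n,(x,\alpha))=(x,\alpha)$ is a genuine self-loop of the induced subgraph $\Gamma(f)^\alpha$, and you dispose of it by saying it ``corresponds to the trivial fixed iteration in $\Gamma(f^\alpha)$''. But $\Gamma(f^\alpha)$ has no such trivial loop: it has a self-loop at $x$ only if $f_i(x,\alpha)=x_i$ for some $i\in\llbracket 1;n-1\rrbracket$, which need not hold. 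Concretely, take $n=2$, $\alpha=0$ and $f(x_1,x_2)=(\overline{x_1},0)$: then $\Gamma(f^0)$ is the $2$-cycle $0\to 1\to 0$ with no loops, while $\Gamma(f)^0$ is that same $2$-cycle with a self-loop at each vertex, so $\varphi$ is not an isomorphism of directed graphs once loops are counted. (Note this $f$ even satisfies the hypothesis of Lemma~\ref{lemma:subgraph} that $G(f)$ has no arc from $n$ to $j\neq n$, so the problem is not excluded by the context in which the lemma is applied.)

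What this shows is that the statement, read literally, only holds up to self-loops; the paper's one-line proof silently ignores the same case, but you explicitly raised it and then closed it with a false claim, so in your write-up it is a genuine gap rather than an omission. The fix is cheap: either prove the weaker (and sufficient) statement that $\varphi$ is an isomorphism after deleting all self-loops from both graphs, or observe directly that every arc of $\Gamma(f)^\alpha$ which is not of the form $\varphi(x)\to\varphi(F_{f^\alpha}(i,x))$ is a self-loop. Either version is all that is needed downstream, since the lemma is invoked only in the proof of Theorem~\ref{th:Adrien} to transfer strong connectivity from $\Gamma(f^\alpha)$ to $\Gamma(f)^\alpha$, and strong connectivity is insensitive to self-loops.
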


\begin{proof}
Let $h$ be the bijection from $\Bool^{n-1}$ to
$\Bool^{n-1}\times \{\alpha\}$ defined by $h(x)=(x,\alpha)$ for all
$x\in\Bool^{n-1}$.
It is easy to see that $h$ is an isomorphism
between $\Gamma(f^\alpha)$ and $\Gamma(f)^\alpha$ that is: $\Gamma(f^\alpha)$ has an
arc from $x$ to $y$ if and only if $\Gamma(f)^\alpha$ has an arc from
$h(x)$ to $h(y)$.
\end{proof}

\begin{theorem}\label{th:Adrien}
Let $f$ be a map from $\Bool^n$ to itself such that:
\begin{enumerate}
\item 
$G(f)$ has no cycle of length at least two;
\item 
every vertex of $G(f)$ with a positive loop has also a negative loop;
\item
every vertex of $G(f)$ is reachable from a vertex with a negative loop.
\end{enumerate}
Then, $\Gamma(f)$ is strongly connected. 
\end{theorem}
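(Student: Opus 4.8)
The plan is to argue by induction on $n$, peeling off one carefully chosen coordinate at each step and using Lemmas~\ref{lemma:subgraph} and~\ref{lemma:iso} to relate $\Gamma(f)$ to the iteration graphs of the two restricted maps $f^{0}$ and $f^{1}$. For the base case $n=1$, hypothesis~3 forces vertex $1$ to carry a negative loop: it is the only vertex, so it must be reachable from a negative-loop vertex, namely itself. A negative loop at $1$ means $f_{11}(x)=-1$ for some $x$, which forces $f_1=\neg$, whose iteration graph is the $2$-cycle $0\to 1\to 0$ and is strongly connected.

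For the inductive step I would first fix which coordinate to remove. By hypothesis~1 the digraph obtained from $G(f)$ by deleting all loops is acyclic, hence has a sink; after relabelling coordinates I may assume vertex $n$ is such a sink, so $G(f)$ has no arc from $n$ to any $i\neq n$. Lemma~\ref{lemma:subgraph} then gives $G(f^{\alpha})=G(f)\setminus n$ for each $\alpha\in\Bool$, and I must check that $f^{0}$ and $f^{1}$ again satisfy the three hypotheses. Conditions~1 and~2 are immediate, since the arcs among the vertices $\neq n$ (loops included) are left unchanged. Condition~3 is the delicate point and rests on the observation that, $n$ being a sink, no path of $G(f)$ ending at a vertex $i\neq n$ can ever visit $n$ (once at $n$ it could not leave); hence a negative-loop vertex reaching $i$ in $G(f)$ already reaches it inside $G(f)\setminus n$. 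By the induction hypothesis $\Gamma(f^{0})$ and $\Gamma(f^{1})$ are strongly connected, and by Lemma~\ref{lemma:iso} so are the induced subgraphs $\Gamma(f)^{0}$ and $\Gamma(f)^{1}$ of $\Gamma(f)$ carried by $\Bool^{n-1}\times\{0\}$ and $\Bool^{n-1}\times\{1\}$.

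It then remains to glue the two halves, and this is where I expect the real difficulty to lie, since it is the step in which all three hypotheses genuinely interact. As each half is already strongly connected, it suffices to produce one arc of $\Gamma(f)$ from the half $x_n=0$ to the half $x_n=1$ and one arc back; because only iterating component $n$ can alter the last coordinate, this amounts to finding $x$ with $x_n=0,\ f_n(x)=1$ and $y$ with $y_n=1,\ f_n(y)=0$. I would split on the loop at $n$. If $n$ has a negative loop, say $f_{nn}(x^{*})=-1$, then unfolding the definition of $f_{nn}$ shows that $f_n$ makes a $0\to 1$ transition at one of $x^{*},\overline{x^{*}}^{n}$ and a $1\to 0$ transition at the other, yielding both crossing arcs at once. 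If $n$ has no negative loop, then hypothesis~2 forbids a positive loop as well, so $f_n$ does not depend on $x_n$; and hypothesis~3 forces at least one arc into $n$, for otherwise $n$ would be isolated and could not be reached from any negative-loop vertex. That incoming arc makes $f_n$ non-constant, so it again attains both values and both crossing arcs exist.

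In either case the two halves, each strongly connected, are joined by an arc in each direction, which makes $\Gamma(f)$ strongly connected and closes the induction. The two places demanding care are thus the transfer of hypothesis~3 to $f^{\alpha}$ (handled by the sink observation) and the crossing-arc construction, where conditions~2 and~3 on the peeled vertex $n$ are exactly what rules out the degenerate behaviours of $f_n$.
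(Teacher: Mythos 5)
Your proposal is correct and follows essentially the same route as the paper's own proof: induction on $n$, peeling off a sink vertex of $G(f)$ via Lemmas~\ref{lemma:subgraph} and~\ref{lemma:iso}, and producing the two crossing arcs between the halves $x_n=0$ and $x_n=1$ by the same case split on whether $n$ carries a negative loop. Your explicit verification that hypothesis~3 transfers to $f^{0}$ and $f^{1}$ (no path ending at $i\neq n$ can visit the sink $n$) is a detail the paper leaves implicit, but the argument is otherwise identical.
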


\begin{proof}
By induction on $n$. Let $f$ be a map from $\Bool^n$ to itself
satisfying the conditions of the statement. If $n=1$ the result is
obvious: according to the third point of the statement, $G(f)$ has a
negative loop; so $f(x)=\overline{x}$ and $\Gamma(f)$ is a cycle of
length two. Assume that $n>1$ and that the theorem is valid for maps
from $\Bool^{n-1}$ to itself. According to the first point of the
statement, $G(f)$ contains at least one vertex $i$ such that $G(f)$
has no arc from $i$ to a vertex $j\neq i$. Without loss of generality,
assume that $n$ is such a vertex. Then, according to
Lemma~\ref{lemma:subgraph}, $f^0$ and $f^1$ satisfy the conditions of
the statement. So, by induction hypothesis, $\Gamma(f^0)$ and
$\Gamma(f^1)$ are strongly connected. So, according to
Lemma~\ref{lemma:iso}, $\Gamma(f)^0$ and $\Gamma(f)^1$ are strongly
connected. To prove that $\Gamma(f)$ is strongly connected, it is
sufficient to prove that $\Gamma(f)$ contains an arc $x\to y$ with
$x_n=0<y_n$ and an arc $x\to y$ with $x_n=1>y_n$. In other words, it
is sufficient to prove that:
\begin{equation}\tag{$*$}
\forall \alpha\in\Bool,~\exists x\in\Bool^n,\qquad  x_n=\alpha\neq f_n(x).
\end{equation}

Assume first that $n$ has a negative loop. Then, by the definition of
$G(f)$, there exists $x\in\Bool^n$ such that $f_{nn}(x)<0$. Consequently,
if $x_n=0$, we have $f_n(x)>f_n(\overline{x}^n)$, so $x_n=0\neq f_n(x)$ and
$\overline{x}^n_n=1\neq f_n(\overline{x}^n)$; and if $x_n=1$, we have
$f_n(x)<f_n(\overline{x}^n)$, so $x_n=1\neq f_n(x)$ and $\overline{x}^n_n=0\neq
f_n(\overline{x}^n)$. In both cases, the condition ($*$) holds.

Now, assume that $n$ has no negative loop. According to the second
point of the statement, $n$ has no loop, \emph{i.e.}, the value of $f_n(x)$
does not depend on the value of $x_n$. According to the third point of
the statement, $n$ is not of in-degree zero in $G(f)$, \emph{i.e.}, $f_n$ is
not a constant. Consequently, there exists $x,y\in \Bool^n$ such that
$f_n(x)=1$ and $f_n(y)=0$. Let $x'=(x_1,\dots,x_{n-1},0)$ and
$y'=(y_1,\dots,y_{n-1},1)$. Since the value of $f_n(x)$
(resp. $f_n(y)$) does not depend on the value of $x_n$ (resp. $y_n$),
we have $f_n(x')=f_n(x)=1\neq x'_n$ (resp. $f_n(y')=f_n(y)=0\neq
y'_n$). So the condition ($*$) holds, and the theorem is proven.
\end{proof}





\section{Application to Pseudo Random Number  Generator}\label{sec:prng}
This section presents a direct application of the theory
developed above.

\subsection{Boolean and Chaos based PRNG}

We have proposed in~\cite{bgw09:ip} a new family of generators that receives 
two PRNGs as inputs. These two generators are mixed with chaotic iterations, 
leading thus to a new PRNG that improves the statistical properties of each
generator taken alone. Furthermore, our generator 
possesses various chaos properties
that none of the generators used as input present.
This former family of PRNGs was reduced to chaotic iterations
of the negation function, \textit{i.e.}, reduced to $G_\neg$. 
However, it is possible to use any function $f$ such that  $G_f$ is chaotic (s.t. the graph $\Gamma(f)$ is strongly connected).

\begin{algorithm}[t]
\KwIn{a function $f$, an iteration number $b$, an initial configuration $x^0$ ($n$ bits)}
\KwOut{a configuration $x$ ($n$ bits)}
$x\leftarrow x^0$\;
$k\leftarrow b + \textit{XORshift}(b+1)$\;
\For{$i=0,\dots,k-1$}
{
$s\leftarrow{\textit{XORshift}(n)}$\;
$x\leftarrow{F_f(s,x)}$\;
}
return $x$\;
\caption{PRNG with chaotic functions}
\label{CI Algorithm}
\end{algorithm}

\begin{algorithm}
\SetAlgoLined
\KwIn{the internal configuration $z$ (a 32-bit word)}
\KwOut{$y$ (a 32-bit word)}
$z\leftarrow{z\oplus{(z\ll13)}}$\;
$z\leftarrow{z\oplus{(z\gg17)}}$\;
$z\leftarrow{z\oplus{(z\ll5)}}$\;
$y\leftarrow{z}$\;
return $y$\;
\medskip
\caption{An arbitrary round of \textit{XORshift} algorithm}
\label{XORshift}
\end{algorithm}

This generator is synthesized in Algorithm~\ref{CI Algorithm}.
It takes as input: a function $f$;
an integer $b$, ensuring that the number of executed iterations is at least $b$ and at most $2b+1$; and an initial configuration $x^0$.
It returns the new generated configuration $x$.  Internally, it embeds two
\textit{XORshift}$(k)$ PRNGs \cite{Marsaglia2003} that returns integers uniformly distributed
into $\llbracket 1 ; k \rrbracket$.
\textit{XORshift} is a category of very fast PRNGs designed by George Marsaglia, which repeatedly uses the transform of exclusive or (XOR, $\oplus$) on a number with a bit shifted version of it. This PRNG, which has a period of $2^{32}-1=4.29\times10^9$, is summed up in Algorithm~\ref{XORshift}. It is used in our PRNG to compute the strategy length and the strategy elements.

We are then left to instantiate the function $f$ in 
Algorithm~\ref{CI Algorithm} according to approaches detailed
in Sect.~\ref{sec:sccg}.
Next section shows how the 
uniformity of distribution has been taken into account.

\subsection{Uniform Distribution of the Output}  

Let us firstly recall that a stochastic matrix is a square matrix where all entries are nonnegative and all rows sum to 1. A double stochastic matrix is a stochastic matrix where all columns sum to 1. Finally, a stochastic matrix $M$ of size $n$ 
is regular if $\exists k \in \mathds{N}^\ast, \forall i,j \in \llbracket 1; n \rrbracket, M_{ij}^k>0$.
 The following theorem is well-known:
\begin{theorem}\label{th}
  If $M$ is a regular stochastic matrix, then $M$ 
  has an unique stationary  probability vector $\pi$. Moreover, 
  if $\pi^0$ is any initial probability vector and 
  $\pi^{k+1} = \pi^k.M $ for $k = 0, 1,\dots$ then the Markov chain $\pi^k$
  converges to $\pi$ as $k$ tends to infinity.
\end{theorem}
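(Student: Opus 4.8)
The plan is to establish the three assertions---existence of a stationary vector, its uniqueness, and convergence of the Markov chain---in a single stroke by exhibiting the iteration $p\mapsto pM$ as an \emph{eventual contraction} on the set of probability vectors. (An alternative is to invoke the Perron--Frobenius theorem: a regular stochastic matrix is a primitive nonnegative matrix, so $1$ is a simple, strictly dominant eigenvalue whose left eigenvector, suitably normalized, is the stationary vector. I prefer the contraction route because it yields existence, uniqueness and convergence simultaneously and elementarily.) First I would use regularity directly: by definition there is $k\in\Nats^\ast$ with $N:=M^k$ satisfying $N_{ij}>0$ for all $i,j$. Set $\delta=\min_{i,j}N_{ij}>0$, and note $\delta\le 1/n$ since the rows of $N$ sum to $1$.

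The key step is a contraction estimate for $N$. Let $\Delta$ denote the set of probability row vectors. I would show that for all $p,q\in\Delta$,
\[
\|pN-qN\|_1 \le (1-n\delta)\,\|p-q\|_1 .
\]
Concretely, writing $\delta_j=\min_i N_{ij}\ge\delta$ for the column minima and using that $v:=p-q$ satisfies $\sum_i v_i=0$, one has $(vN)_j=\sum_i v_i\,(N_{ij}-\delta_j)$, whence $\|vN\|_1\le\sum_j\sum_i|v_i|\,(N_{ij}-\delta_j)=(1-\sum_j\delta_j)\,\|v\|_1\le(1-n\delta)\,\|v\|_1$. This says precisely that the Dobrushin ergodic coefficient of $N$ is strictly below $1$ because every pair of rows of $N$ overlaps in mass at least $n\delta$. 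Since $\Delta$ is closed in $\R^n$ and hence complete, the Banach fixed-point theorem gives a unique $\pi\in\Delta$ with $\pi N=\pi$, and $pN^q\to\pi$ geometrically as $q\to\infty$ for every $p\in\Delta$.

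It then remains to pass from $N=M^k$ back to $M$. For stationarity and uniqueness under $M$: the vector $\pi M$ is again a probability vector, and $(\pi M)N=\pi(MN)=\pi(NM)=(\pi N)M=\pi M$, so $\pi M$ is fixed by $N$; by uniqueness of the fixed point of $N$ this forces $\pi M=\pi$. Conversely any stationary vector of $M$ is a fixed point of $N$, hence equals $\pi$, giving uniqueness. For convergence of the full sequence $\pi^k=\pi^0 M^k$, I would write each index as $m=qk+r$ with $0\le r<k$ and observe that $\pi^0 M^m=(\pi^0 M^r)\,N^q\to\pi$ as $q\to\infty$, since $\pi^0 M^r\in\Delta$. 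As there are only finitely many residues $r$ and every corresponding subsequence tends to the same limit $\pi$, the whole sequence converges to $\pi$. (Equivalently, submultiplicativity of the Dobrushin coefficient gives $\|\pi^0 M^m-\pi\|_1\to 0$ directly.)

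The main obstacle is the contraction estimate itself---verifying that strict positivity of all entries of $N$ forces the Dobrushin coefficient below $1$ with the explicit rate $1-n\delta$; this is the one place where regularity is genuinely used. Everything afterwards (the fixed-point argument, lifting stationarity from $N$ to $M$, and the residue-class bookkeeping for indices that are not multiples of $k$) is routine and follows mechanically once the contraction is in hand.
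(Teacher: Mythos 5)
Your proposal is correct. Note, however, that the paper offers no proof of this statement at all: it is introduced with ``The following theorem is well-known'' and used as a black box (it is the classical convergence theorem for regular finite Markov chains, usually derived from Perron--Frobenius or from Doeblin's condition). So there is no argument in the paper to compare against; what you have done is supply a complete, self-contained proof where the authors chose to cite folklore. Your route---extracting $N=M^k$ with all entries at least $\delta>0$, proving the Dobrushin-type estimate $\|pN-qN\|_1\le(1-n\delta)\|p-q\|_1$ for probability vectors $p,q$ by subtracting the column minima from a zero-sum difference vector, applying the Banach fixed-point theorem on the complete set $\Delta$ of probability vectors, and then lifting stationarity and convergence from $N$ back to $M$ via commutativity of $M$ with $N$ and the residue-class decomposition $m=qk+r$---is sound at every step. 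In particular the two points that are easiest to fumble are handled correctly: you justify that the contraction constant $1-n\delta$ lies in $[0,1)$ because the rows of $N$ sum to $1$, and you do not merely prove convergence along the subsequence of multiples of $k$ but glue the finitely many residue-class subsequences together. The one thing your write-up buys beyond the paper's citation is an explicit geometric rate of convergence, which is actually relevant to the paper's later experimental choice of the iteration bound $b$; the cost is about a page of standard analysis that the authors evidently judged unnecessary for an applications-oriented venue.
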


\begin{figure}[t]
  \begin{center}
    \subfloat[$\Gamma(g)$.]{
      \begin{minipage}{0.13\textwidth}
        \begin{center}
          \includegraphics[height=3cm]{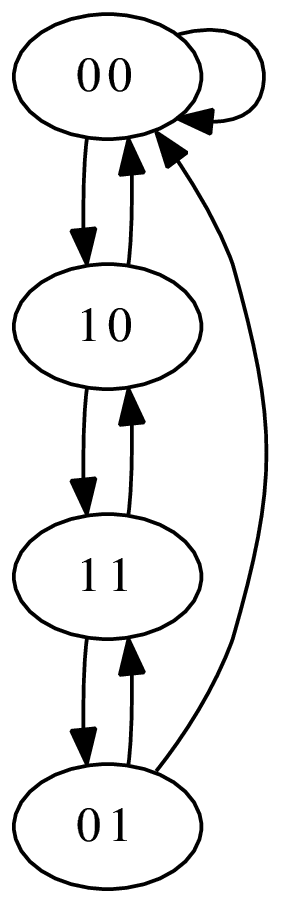}
        \end{center}
      \end{minipage}
      \label{fig:g:iter}
    }
    \subfloat[$G(g)$.]{
      \begin{minipage}{0.13\textwidth}
        \begin{center}
         \includegraphics[height=1.5cm]{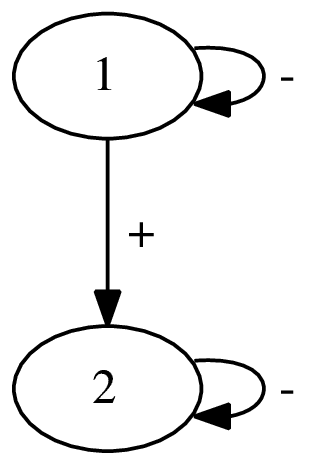}
        \end{center}
      \end{minipage}
      \label{fig:g:inter}
    }
    \subfloat[$\check{M}_g $.]{
      \begin{minipage}{0.13\textwidth}
        \begin{center}
          $\left( 
            \begin{array}{cccc} 
              1 & 0 & 1 & 0 \\ 
              1 & 0 & 0 & 1 \\ 
              1 & 0 & 0 & 1 \\ 
              0 & 1 & 1 & 0 
            \end{array}
          \right)
          $
        \end{center}
      \end{minipage}
      \label{fig:g:incidence}
    }
    \subfloat[$\Gamma(h)$.]{
      \begin{minipage}{0.13\textwidth}
        \begin{center}
          \includegraphics[height=3cm]{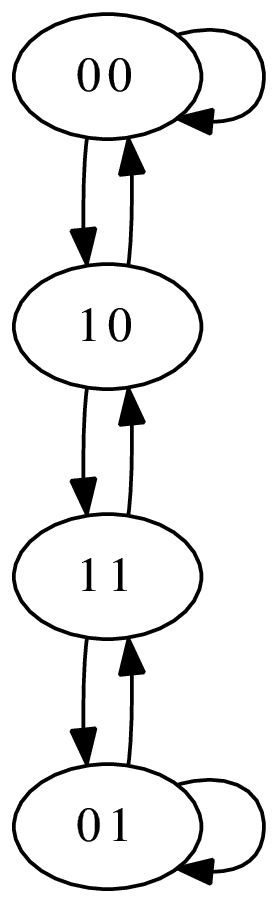}
        \end{center}
      \end{minipage}
      \label{fig:h:iter}
    }
      \subfloat[$G(h)$.]{
        \begin{minipage}{0.13\textwidth}
          \begin{center}
            \includegraphics[height=1.5cm]{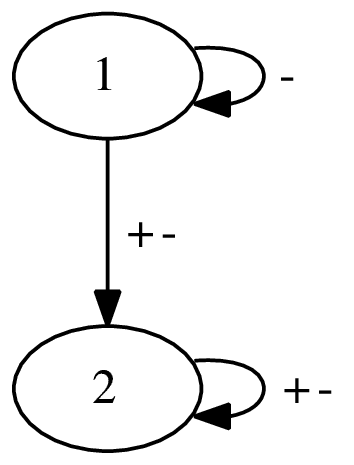}
          \end{center}
        \end{minipage}
        \label{fig:h:inter}
      }
      \subfloat[$\check{M}_h $.]{
        \begin{minipage}{0.13\textwidth}
          \begin{center}
            $\left( 
              \begin{array}{cccc} 
                1 & 0 & 1 & 0 \\ 
                0 & 1 & 0 & 1 \\ 
                1 & 0 & 0 & 1 \\ 
                0 & 1 & 1 & 0 
              \end{array}
            \right)
            $
          \end{center}
        \end{minipage}
        \label{fig:h:incidence}
      }
    \end{center}
    \caption{Graphs of candidate functions  with $n=2$}
    \label{fig:xplgraph}
  \end{figure}

Let us explain on a small example with 2 elements 
that the application of such a theorem allows to verify whether 
the output is uniformly distributed or not.
Let then $g$ and $h$ be the two functions  from $\Bool^2$
to itself defined in Fig.~\ref{fig:xplgraph}
and whose iteration graphs are strongly connected.
As the \textit{XORshift} PRNG is uniformly distributed, 
the strategy is uniform on $\llbracket 1, 2 \rrbracket$,
and each edge of $\Gamma(g)$ and of $\Gamma(h)$ 
has a probability $1/2$ to be traversed.
In other words, $\Gamma(g)$ is the oriented graph of a Markov chain.
It is thus easy to verify that the transition matrix of such a process
is $M_g   = \frac{1}{2} \check{M}_g$, 
where $\check{M}_g$ is the adjacency matrix given in 
Fig.~\ref{fig:g:incidence}, and  similarly for $M_h$.

Both $M_g$ and $M_h$ are (stochastic and) regular since no element is null either in
$M^4_g$ or in $M^4_h$.
Furthermore, the probability vectors $\pi_g=(0.4, 0.1, 0.3,0.2)$ and
$\pi_h=(0.25,0.25,0.25,0.25)$ verify $\pi_g M_g = \pi_g$ and 
$\pi_h M_h = \pi_h$. 
Thus, due to Theorem \ref{th}, 
for any initial probability vector $\pi^0$, we have 
$\lim_{k \to \infty} \pi^0 M^k_g = \pi_g$ and 
$\lim_{k \to \infty} \pi^0 M^k_h = \pi_h$. 
So the Markov process associated to $h$ tends to the uniform 
distribution whereas the one associated to $g$ does not. 
It induces that $g$ shouldn't be iterated in a PRNG.
On the contrary, 
$h$ can be embedded into the PRNG Algorithm~\ref{CI Algorithm}, 
provided the number $b$ of iterations between two successive
values is sufficiently large so that the Markov process becomes 
close to the uniform distribution.

Let us first prove the following technical lemma.
\begin{lemma}\label{lem:stoc}
Let $f: \Bool^{n} \rightarrow \Bool^{n}$, $\Gamma(f)$ its iteration graph, $\check{M}$ the adjacency
matrix of $\Gamma(f)$, and $M$ a $n\times n$ matrix defined by 
$
M_{ij} = \frac{1}{n}\check{M}_{ij}$ 
if $i \neq j$ and  
$M_{ii} = 1 - \frac{1}{n} \sum\limits_{j=1, j\neq i}^n \check{M}_{ij}$ otherwise.
Then $M$ is a regular stochastic matrix iff $\Gamma(f)$ is strongly connected.
\end{lemma}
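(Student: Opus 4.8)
The plan is to read $M$ as the transition matrix of a Markov chain on $\Bool^n$ and to match its support with the iteration graph. First I would check that $M$ is genuinely stochastic: every entry is non-negative (for the diagonal one needs $\sum_{j\neq i}\check{M}_{ij}\le n$, which holds because each configuration $x$ has exactly $n$ outgoing arcs in $\Gamma(f)$, one per coordinate, hence at most $n$ distinct out-neighbours different from $x$), while each row sums to $1$ by the very definition of $M_{ii}$. The key observation, bridging the combinatorial and linear-algebraic sides, is that the support of $M$ is exactly $\Gamma(f)$. Indeed, for $y\neq x$ the configurations $F_f(i,x)$ that differ from $x$ are pairwise distinct (each differs from $x$ in the single coordinate $i$), so $M_{xy}=\frac{1}{n}\check{M}_{xy}>0$ iff $\Gamma(f)$ carries the arc $x\to y$; and a short count yields $M_{xx}=\frac{1}{n}\,\bigl|\{i : f_i(x)=x_i\}\bigr|$, so $M_{xx}>0$ iff $x$ has a self-loop. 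Hence $M_{xy}>0$ iff $\Gamma(f)$ has an arc from $x$ to $y$, for every ordered pair $(x,y)$.

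From this identity the forward implication is routine. Expanding the product, $(M^k)_{xy}$ is a sum of non-negative terms indexed by the length-$k$ walks from $x$ to $y$ in the support graph, so $(M^k)_{xy}>0$ iff $\Gamma(f)$ contains a directed walk of length $k$ from $x$ to $y$. If $M$ is regular, fix $k$ with $M^k$ entrywise positive; then between any two configurations there is a directed walk, hence a directed path, and $\Gamma(f)$ is strongly connected.

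For the converse I would split regularity into irreducibility and aperiodicity. Strong connectivity of $\Gamma(f)$ is, through the support identity, precisely irreducibility of $M$: a directed path from $x$ to $y$ of length $k$ gives $(M^k)_{xy}>0$. To upgrade irreducibility to primitivity (a \emph{single} $k$ working simultaneously for all pairs) I would invoke the standard fact that an irreducible non-negative matrix with a positive diagonal entry is primitive, since one self-loop supplies a cycle of length $1$ and thus forces the period to equal $1$. It therefore suffices to exhibit a self-loop in $\Gamma(f)$.

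Exhibiting that self-loop is the step I expect to be the real obstacle, and it deserves care rather than a one-line dismissal. A configuration $x$ has no self-loop exactly when $f_i(x)\neq x_i$ for all $i$, i.e. when $f(x)=\neg(x)$; consequently $\Gamma(f)$ is entirely free of self-loops only for $f=\neg$. For every $f\neq\neg$ a self-loop exists, and strong connectivity then yields regularity as above. The map $f=\neg$ is, however, genuinely exceptional: $\Gamma(\neg)$ is the hypercube, which is strongly connected yet bipartite, so $M$ has period $2$ and no power of it is entrywise positive. I would therefore handle the converse under the hypothesis $f\neq\neg$ (equivalently, assuming $\Gamma(f)$ has at least one self-loop), since the equivalence as literally stated breaks down precisely at $\neg$, and I would flag this boundary case explicitly as the crux of the argument.
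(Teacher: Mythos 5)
Your argument is correct, and it takes a genuinely different---and in fact sounder---route than the paper's. The forward direction coincides with the paper's: a positive entry of $M^k$ witnesses a length-$k$ walk in the support graph of $M$, which you rightly identify with $\Gamma(f)$ (self-loops included). For the converse, however, the paper does not go through aperiodicity at all: it picks, for each pair $(i,j)$, some $k_{ij}\le 2^n$ with $\check{M}^{k_{ij}}_{ij}>0$, asserts that every multiple $l\times k_{ij}$ then also satisfies $\check{M}^{l\times k_{ij}}_{ij}>0$, and concludes with the least common multiple of the $k_{ij}$. That ``multiples'' step is false in general (a length-$k$ path from $i$ to $j$ yields a length-$2k$ walk only if one can pad with a closed walk of length $k$), and the boundary case you flag is exactly where it breaks: for $f=\neg$ the graph $\Gamma(\neg)$ is the bipartite hypercube with no self-loops, $M$ is the simple random walk on it, every power of $M$ has zero entries by parity, and yet $\Gamma(\neg)$ is strongly connected. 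So the lemma as literally stated is false, the paper's converse contains a genuine error, and your decomposition into irreducibility plus aperiodicity---with aperiodicity supplied by a self-loop, which exists precisely when $f\neq\neg$---is the correct repair. What your approach buys is a proof that actually closes; what it costs is the single exceptional map $\neg$, which must be excluded (or, equivalently, one must assume $\Gamma(f)$ has a self-loop, or more generally that the gcd of its cycle lengths is $1$) for the equivalence to hold. You were right to treat that exclusion as the crux rather than a formality.
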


\begin{proof}  Notice first that $M$ is a stochastic matrix by construction.
If there exists $k$ s.t. $M_{ij}^k>0$ for any $i,j\in \llbracket
1;  2^n  \rrbracket$, the  inequality  $\check{M}_{ij}^k>0$  is thus  established.
Since $\check{M}_{ij}^k$ is the number of  paths from $i$ to $j$ of length $k$
in $\Gamma(f)$ and since such a number is positive, thus
$\Gamma(f)$ is strongly connected.

Conversely, if $\Gamma(f)$ is SCC, then for all vertices $i$ and $j$, a path can
be  found to  reach $j$  from $i$  in at  most $2^n$  steps.  There  exists thus
$k_{ij} \in \llbracket 1,  2^n \rrbracket$ s.t. $\check{M}_{ij}^{k_{ij}}>0$.  
As all the multiples $l \times k_{ij}$ of $k_{ij}$ are such that 
$\check{M}_{ij}^{l\times  k_{ij}}>0$, 
we can  conclude that, if
$k$ is the least common multiple of $\{k_{ij}  \big/ i,j  \in \llbracket 1,  2^n \rrbracket  \}$ thus 
$\forall i,j  \in \llbracket  1, 2^n \rrbracket,  \check{M}_{ij}^{k}>0$. 
So, $\check{M}$ and thus $M$ are regular.
\end{proof}


With such a material, we can formulate and prove the following theorem.
\begin{theorem}
  Let $f: \Bool^{n} \rightarrow \Bool^{n}$, $\Gamma(f)$ its
  iteration graph, $\check{M}$ its adjacency
  matrix and $M$ a $n\times n$ matrix defined as in the previous lemma.
  If $\Gamma(f)$ is SCC then 
  the output of the PRNG detailed in Algorithm~\ref{CI Algorithm} follows 
  a law that tends to the uniform distribution 
  if and only if $M$ is a double stochastic matrix.
\end{theorem}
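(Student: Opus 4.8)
The plan is to read Algorithm~\ref{CI Algorithm} as a finite Markov chain on the $2^n$ configurations of $\Bool^n$ and to reduce the claim to a linear-algebra characterization of when the limiting distribution of that chain is uniform.

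First I would pin down the transition matrix of the process. Since \textit{XORshift} returns indices uniformly on $\llbracket 1;n\rrbracket$, each inner iteration sends the current configuration $x$ to $F_f(s,x)$ for $s$ chosen uniformly. For a fixed $x$, the index $i$ yields the neighbour $\overline{x}^i$ when $f_i(x)\neq x_i$ and leaves $x$ fixed when $f_i(x)=x_i$; the neighbours $\overline{x}^i$ obtained in the first case are pairwise distinct, so $x$ moves to each of them with probability $\frac{1}{n}$ and stays at $x$ with the complementary probability. As $\check{M}$ is a $0/1$ matrix, these probabilities are exactly $M_{xy}=\frac{1}{n}\check{M}_{xy}$ for $y\neq x$ and $M_{xx}=1-\frac{1}{n}\sum_{y\neq x}\check{M}_{xy}$, i.e. the matrix $M$ of Lemma~\ref{lem:stoc}. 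Consequently, if $\pi^0$ denotes the law of the initial configuration $x^0$, the law of the configuration after $k$ inner steps is $\pi^0 M^k$.

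Next I would invoke convergence. As $\Gamma(f)$ is strongly connected, Lemma~\ref{lem:stoc} makes $M$ a regular stochastic matrix, so Theorem~\ref{th} gives a unique stationary probability vector $\pi$ with $\pi^0 M^k\to\pi$ as $k\to\infty$ for every $\pi^0$. In Algorithm~\ref{CI Algorithm} the executed number of inner steps satisfies $b\le k\le 2b+1$, hence $k\to\infty$ as $b$ grows; since the limit $\pi$ of $\pi^0 M^k$ is the same for every divergent choice of $k$, the output law tends to $\pi$ independently of the random value of $k$. Thus ``the output tends to the uniform distribution'' is precisely the statement $\pi=\pi_u$, where $\pi_u=(\frac{1}{2^n},\dots,\frac{1}{2^n})$.

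It then remains to prove that $\pi=\pi_u$ if and only if $M$ is doubly stochastic, which is the heart of the argument. For the ``if'' direction I would note that when every column of $M$ sums to $1$ one has $(\pi_u M)_j=\frac{1}{2^n}\sum_i M_{ij}=\frac{1}{2^n}=(\pi_u)_j$, so $\pi_u$ is stationary and, by the uniqueness supplied by Theorem~\ref{th}, $\pi=\pi_u$. For the ``only if'' direction, $\pi=\pi_u$ gives $\pi_u M=\pi_u$, which forces $\sum_i M_{ij}=1$ for each $j$, i.e. $M$ is doubly stochastic (it is stochastic by construction). The computations here are routine; the only points needing care are invoking uniqueness of the stationary vector so that verifying $\pi_u M=\pi_u$ genuinely identifies $\pi_u$ with the limiting law, and checking that the random, $b$-dependent iteration count does not disturb the limit.
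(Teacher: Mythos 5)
Your proposal is correct and follows essentially the same route as the paper: invoke Lemma~\ref{lem:stoc} to get regularity of $M$, Theorem~\ref{th} for the unique stationary vector, and observe that the uniform vector is stationary iff every column of $M$ sums to one. You simply spell out two points the paper leaves implicit --- the identification of $M$ as the transition matrix of the algorithm and the irrelevance of the random iteration count $k\in\llbracket b;2b+1\rrbracket$ --- which is a welcome but not substantively different elaboration.
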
 
\begin{proof}
$M$ is a regular stochastic matrix (Lemma~\ref{lem:stoc}) that has a unique stationary probability vector (Theorem \ref{th}).
  Let $\pi$  be 
  $\left(\frac{1}{2^n}, \hdots, \frac{1}{2^n} \right)$.
  We have $\pi M = \pi$ iff 
  the sum of values of each column of $M$  is one,  
  \textit{i.e.}, iff
  $M$ is double stochastic.
  \end{proof}

\subsection{Experiments}

Let us consider the interaction graph $G(f)$ given in Fig.~\ref{fig:G}. 
It verifies Theorem~\ref{th:Adrien}: 
all the functions $f$ whose interaction graph is $G(f)$ have then 
a strongly connected iteration graph $\Gamma(f)$.
Practically, a simple constraint solving has found 
520 non isomorphic functions 
and only 16 of them have a double stochastic matrix.
Figure~\ref{fig:listfonction} synthesizes them by 
defining the images of 
0,1,2,\ldots,14,15. 
Let $e_j$ be the unit vector in the canonical basis, 
the third column gives 
$$
\max\limits_{j \in \llbracket 1, 2^n \rrbracket} 
\{
\min \{
 k \mid k \in \Nats, \vectornorm{\pi_j M_f^k - \pi} < 10^{-4}
\}
\}
\textrm{where $\pi_j$ is $1/n. e_j$,} 
$$
that is the smallest iteration number 
that is sufficient to obtain a deviation less than $10^{-4}$
from the uniform distribution. 
Such a number is the parameter $b$ in Algorithm~\ref{CI Algorithm}.

\begin{figure}[th]
  \begin{center}
  \subfloat[Interaction Graph]{
    \begin{minipage}{0.20\textwidth}
      \begin{center}
        \includegraphics[width=3.5cm]{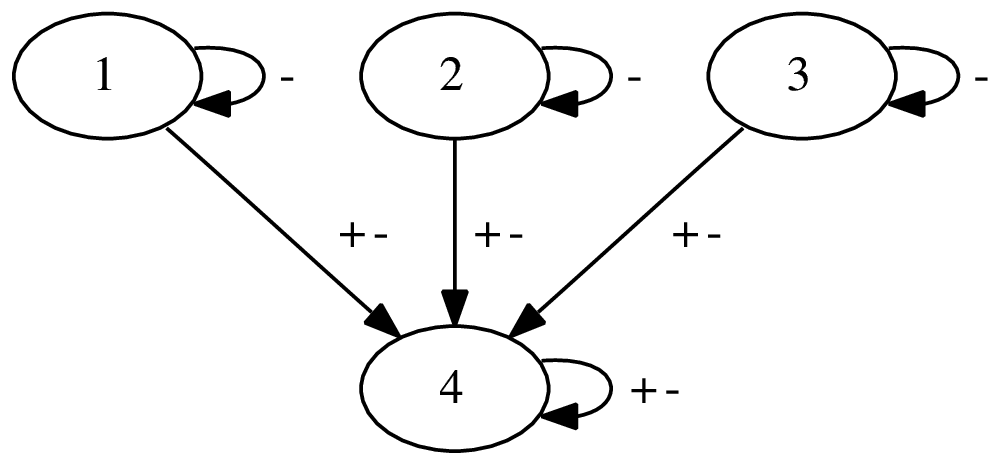}
      \end{center}
    \end{minipage}
    \label{fig:G}
  }\hfill
  \subfloat[Double Stochastic Functions]{
    \begin{minipage}{0.75\textwidth}
      \begin{scriptsize}
        \begin{center}
          \begin{tabular}{|c|c|c|}
\hline
{Name}& {Function image}&{$b$} \\
\hline 
$\mathcal{F}_1$ & 14, 15, 12, 13, 10, 11, 8, 9, 6, 7, 4, 5, 2, 3, 1, 0  & 206\\
\hline
$\mathcal{F}_2$ &14, 15, 12, 13, 10, 11, 8, 9, 6, 7, 5, 4, 3, 2, 0, 1  
 & 94 \\
\hline
$\mathcal{F}_3$ &14, 15, 12, 13, 10, 11, 8, 9, 6, 7, 5, 4, 3, 2, 1, 0
 & 69 \\
\hline
$\mathcal{F}_4$ &14, 15, 12, 13, 10, 11, 9, 8, 6, 7, 5, 4, 3, 2, 0, 1
 & 56 \\
\hline
$\mathcal{F}_5$ &14, 15, 12, 13, 10, 11, 9, 8, 6, 7, 5, 4, 3, 2, 1, 0
 & 48 \\
\hline
$\mathcal{F}_6$ &14, 15, 12, 13, 10, 11, 9, 8, 7, 6, 4, 5, 2, 3, 0, 1
 & 86 \\
\hline
$\mathcal{F}_7$ &14, 15, 12, 13, 10, 11, 9, 8, 7, 6, 4, 5, 2, 3, 1, 0
 & 58 \\
\hline
$\mathcal{F}_8$ &14, 15, 12, 13, 10, 11, 9, 8, 7, 6, 4, 5, 3, 2, 1, 0
 & 46 \\
\hline
$\mathcal{F}_9$ &14, 15, 12, 13, 10, 11, 9, 8, 7, 6, 5, 4, 3, 2, 0, 1
 & 42 \\
\hline
$\mathcal{F}_{10}$ &14, 15, 12, 13, 10, 11, 9, 8, 7, 6, 5, 4, 3, 2, 1, 0
 & 69 \\
\hline
$\mathcal{F}_{11}$ &14, 15, 12, 13, 11, 10, 9, 8, 7, 6, 5, 4, 2, 3, 1, 0
 & 58 \\
\hline
$\mathcal{F}_{12}$ &14, 15, 13, 12, 11, 10, 8, 9, 7, 6, 4, 5, 2, 3, 1, 0
 & 35 \\
\hline
$\mathcal{F}_{13}$ &14, 15, 13, 12, 11, 10, 8, 9, 7, 6, 4, 5, 3, 2, 1, 0
 & 56 \\
\hline
$\mathcal{F}_{14}$ &14, 15, 13, 12, 11, 10, 8, 9, 7, 6, 5, 4, 3, 2, 1, 0
 & 94 \\
\hline
$\mathcal{F}_{15}$ &14, 15, 13, 12, 11, 10, 9, 8, 7, 6, 5, 4, 3, 2, 0, 1
 & 86 \\ 
\hline
$\mathcal{F}_{16}$ &14, 15, 13, 12, 11, 10, 9, 8, 7, 6, 5, 4, 3, 2, 1, 0
  & 206 \\
 \hline
\end{tabular}
\end{center}
\end{scriptsize}
\end{minipage}
\label{fig:listfonction}
}
\end{center}
\caption{Chaotic Functions Candidates with $n=4$}
 \end{figure}

Quality of produced random sequences have been evaluated with the NIST
Statistical  Test Suite  SP
800-22~\cite{RSN+10}.
For all 15 tests of this battery, the significance level $\alpha$ is set to $1\%$:
a \emph{p-value} which is greater than 0.01 is equivalent that the keystream is accepted as random with a confidence of $99\%$.
Synthetic results in Table.~\ref{fig:TEST}
show that  
all these functions successfully
pass this statistical battery of tests.

\begin{table}[th]
\begin{scriptsize}
\begin{tabular}{|*{17}{c|}}
\hline
{Property}& $\mathcal{F}_{1}$ &$\mathcal{F}_{2}$ &$\mathcal{F}_{3}$ &$\mathcal{F}_{4}$ &$\mathcal{F}_{5}$ &$\mathcal{F}_{6}$ &$\mathcal{F}_{7}$ &$\mathcal{F}_{8}$ &$\mathcal{F}_{9}$ &$\mathcal{F}_{10}$ &$\mathcal{F}_{11}$ &$\mathcal{F}_{12}$ &$\mathcal{F}_{13}$ &$\mathcal{F}_{14}$ &$\mathcal{F}_{15}$ &$\mathcal{F}_{16}$ \\
\hline
Frequency &77.9 &15.4 &83.4 &59.6 &16.3 &38.4 &20.2 &29.0 &77.9 &21.3 &65.8 &85.1 &51.4 &35.0 &77.9 &92.4 \\ 
 \hline 
BlockFrequency &88.3 &36.7 &43.7 &81.7 &79.8 &5.9 &19.2 &2.7 &98.8 &1.0 &21.3 &63.7 &1.4 &7.6 &99.1 &33.5 \\ 
 \hline 
CumulativeSums &76.4 &86.6 &8.7 &66.7 &2.2 &52.6 &20.8 &80.4 &9.8 &54.0 &73.6 &80.1 &60.7 &79.7 &76.0 &44.7 \\ 
 \hline 
Runs &5.2 &41.9 &59.6 &89.8 &23.7 &76.0 &77.9 &79.8 &45.6 &59.6 &89.8 &2.4 &96.4 &10.9 &72.0 &11.5 \\ 
 \hline 
LongestRun &21.3 &93.6 &69.9 &23.7 &33.5 &30.4 &41.9 &43.7 &30.4 &17.2 &41.9 &51.4 &59.6 &65.8 &11.5 &61.6 \\ 
 \hline 
Rank &1.0 &41.9 &35.0 &45.6 &51.4 &20.2 &31.9 &83.4 &89.8 &38.4 &61.6 &4.0 &21.3 &69.9 &47.5 &95.6 \\ 
 \hline 
FFT &40.1 &92.4 &97.8 &86.8 &43.7 &38.4 &76.0 &57.5 &36.7 &35.0 &55.4 &57.5 &86.8 &76.0 &31.9 &7.6 \\ 
 \hline 
NonOverlappingTemplate &49.0 &45.7 &50.5 &51.0 &48.8 &51.2 &51.6 &50.9 &50.9 &48.8 &45.5 &47.3 &47.0 &49.2 &48.6 &46.4 \\ 
 \hline 
OverlappingTemplate &27.6 &10.9 &53.4 &61.6 &16.3 &2.7 &59.6 &94.6 &88.3 &55.4 &76.0 &23.7 &47.5 &91.1 &65.8 &81.7 \\ 
 \hline 
Universal &24.9 &35.0 &72.0 &51.4 &20.2 &74.0 &40.1 &23.7 &9.1 &72.0 &4.9 &13.7 &14.5 &1.8 &93.6 &65.8 \\ 
 \hline 
ApproximateEntropy &33.5 &57.5 &65.8 &53.4 &26.2 &98.3 &53.4 &63.7 &38.4 &6.7 &53.4 &19.2 &20.2 &27.6 &67.9 &88.3 \\ 
 \hline 
RandomExcursions &29.8 &35.7 &40.9 &36.3 &54.8 &50.8 &43.5 &46.0 &39.1 &40.8 &29.6 &42.0 &34.8 &33.8 &63.0 &46.3 \\ 
 \hline 
RandomExcursionsVariant &32.2 &40.2 &23.0 &39.6 &47.5 &37.2 &56.9 &54.6 &53.3 &31.5 &23.0 &38.1 &52.3 &57.1 &47.7 &40.8 \\ 
 \hline 
Serial &56.9 &58.5 &70.4 &73.2 &31.3 &45.9 &60.8 &39.9 &57.7 &21.2 &6.4 &15.6 &44.7 &31.4 &71.7 &49.1 \\ 
 \hline 
LinearComplexity &24.9 &23.7 &96.4 &61.6 &83.4 &49.4 &49.4 &18.2 &3.5 &76.0 &24.9 &97.2 &38.4 &38.4 &1.1 &8.6 \\ 
 \hline
\end{tabular}
\end{scriptsize}

\caption{NIST Test Evaluation of PRNG instances}\label{fig:TEST}  
\end{table}

\section{Conclusion and Future Work}
This work has shown that
discrete-time dynamical systems $G_f$ are chaotic iff embedded Boolean
maps $f$ have strongly connected iteration graph $\Gamma(f)$.
Sufficient conditions on its interaction graph $G(f)$ 
have been further proven to ensure this strong connexity.
Finally, we have proven that the output of such a function is uniformly 
distributed iff the induced Markov chain can be represented as 
a double stochastic matrix.
We have applied such a complete theoretical work on chaos to 
pseudo random number generation and all
experiments have confirmed theoretical
results. 
As far as we know, this work is the first one that allows to \emph{compute}
new functions whose chaoticity is proven and preserved during implementation.
The approach relevance has been shown on PRNGs but is not limited to that 
domain.
In fact, this whole work has applications everywhere chaoticity is a 
possible answer,
e.g., in hash functions, digital watermarking\ldots

In a future work, we will investigate whether the characterization of uniform 
distribution may be expressed in terms of interaction graph, avoiding thus to 
generate functions and to check later whether they induce double stochastic 
Markov matrix. 
The impact of the description of chaotic iterations as Markov processes 
will be studied more largely.
We will look for new characterizations concerning other relevant topological properties
of disorder, such as topological entropy, expansivity, Lyapunov exponent, 
instability, etc.
Finally, the relation between these mathematical definitions and intended 
properties for each targeted application will be investigated too, specifically 
in the security field.

\bibliographystyle{splncs}
\bibliography{abbrev,biblio,mabase,biblioand}

\end{document}